\newtheorem{theorem}{Theorem}
\newtheorem{lemma}{Lemma}
  \newtheorem{problem}{Problem}
\newcommand{\Lmin}[1]{\mathit{LMin}_{#1}}
\newcommand{\Lmax}[1]{\mathit{LMax}_{#1}}
\newcommand{\sort}{\mathrm{sort}}
\newcommand{\Path}{\mathit{path}}
\title{
 Order preserving pattern matching \\ on trees and DAGs
}
\date{}
\author{
  Temma Nakamura \quad
  Shunsuke Inenaga \quad \\
  Hideo Bannai \quad
  Masayuki Takeda \\ \\
  {Department of Informatics, Kyushu University, Japan}\\
  {\small{\texttt{\{temma.nakamura, inenaga, bannai, takeda\}@inf.kyushu-u.ac.jp}}}
}
\begin{document}
\maketitle


\begin{abstract}
  The \emph{order preserving pattern matching} (\emph{OPPM}) problem is,
  given a pattern string $p$ and a text string $t$,
  find all substrings of $t$ which have the same relative orders as $p$.
  In this paper, we consider two variants of the OPPM problem
  where a set of text strings is given as a \emph{tree} or a \emph{DAG}.
  We show that the OPPM problem for a single pattern $p$ of length $m$
  and a text tree $T$ of size $N$ can be solved in $O(m+N)$ time
  if the characters of $p$ are drawn from
  an integer alphabet of polynomial size.
  The time complexity becomes $O(m \log m + N)$
  if the pattern $p$ is over a general ordered alphabet.
  We then show that the OPPM problem for a single pattern
  and a text DAG is NP-complete.
\end{abstract}


\section{Introduction}

The \emph{order preserving pattern matching} (\emph{OPPM}) problem is,
given a pattern string $p$ and a text string $t$,
find all substrings of $t$ which have the same relative orders as $p$.
For instance, let $p = (22, 41, 35, 37)$ and
$t = (63, 18, 48, 29, 42, 56, 25, 51)$.
The relative orders of the characters in $p$ is $1, 4, 2, 3$.
A substring $t[2..5] = (18, 48, 29, 42)$ have the same
relative orders $1, 4, 2, 3$ as $p$,
and hence the occurrence of this substring is reported.
OPPM captures structural isomorphism of strings,
and thus has potential applications in
the analysis of times series such as stock prices, 
and in melody matching of musical sequences~\cite{KimEFHIPPT14,ChoNPS15}.

Let $m$ and $n$ be the lengths of the pattern string $p$ and the text string $t$,
respectively.
Kim et al.~\cite{KimEFHIPPT14} proposed an $O(m \log m + n)$-time
algorithm for the OPPM problem.
Independently, Kubica et al.~\cite{KubicaKRRW13} proposed
an $O(\sort(p) + n)$-time algorithm,
where $\sort(p)$ denotes the time complexity to sort the elements in $p$;
$\sort(p) = O(m \log m)$ for general ordered alphabets
and $\sort(p) = O(m)$ for integer alphabets of size $m^{O(1)}$.
These algorithms are based on the Morris-Pratt algorithm~\cite{MorrisPratt}.
Kubica et al.'s algorithm works when the input strings do not contain
same characters.
Cho et al.~\cite{ChoNPS15} showed how Kubica et al.'s algorithm can be
modified when there are same characters in the input strings,
retaining the same efficiency.
Other types of algorithms for the OPPM problem
have also been proposed~(e.g., see \cite{KimEFHIPPT14,ChoNPS15,ChhabraT16,FaroK16}).

This paper considers two natural extensions to the OPPM problem,
where a set of text strings is given as a tree or a DAG.
We show that the OPPM problem for a single pattern $p$
and a text tree $T$ of size $N$ can be solved in $O(\sort(p)+N)$ time.
(and hence $O(m \log m + N)$ time for general ordered alphabets and
$O(m+N)$ time for integer alphabets polynomial size).
Our method uses a Morris-Pratt type of (non-deterministic)
pattern matching automaton, and 
generalizes the existing results for the OPPM problem
on a single text string.
We then show that the OPPM problem for a single pattern
and a text DAG is NP-complete.

\subsection*{Related work}

The exact pattern matching problem on a single pattern string
and a tree was first considered by Dubiner et al.~\cite{DubinerGM94}.
Their algorithm is based on a (non-deterministic) Morris-Pratt automaton.

Amir and Navarro~\cite{AmirN09} considered
the \emph{parameterized pattern matching} (\emph{PPM}) problem on trees.
Let $\sigma$ be the alphabet size.
They showed that the PPM problem on trees can be solved in
$O(N \log(\min\{\sigma, m\}))$ time, 
provided that the \emph{deterministic} version of
a Morris-Pratt type automaton is available.
However, the size of the deterministic version of such an automaton
can be as large as $O(m^2)$.
Hence, their algorithm takes $O(m^2 + N(\min\{\sigma, m\}))$
time in the worst case\footnote{Simon~\cite{Simon94} proposed
  an $O(m)$-space Morris-Pratt automaton
  for exact pattern matching, however, it is unclear if this can be extended to PPM or OPPM.}.

Recall that the running time per text character of
a \emph{non-deterministic} Morris-Pratt automaton depends
on the number of failure transitions used per text character.
The key analysis of a total linear running time
of this method on a single text string
is that this number is amortized constant.
The same amortization argument holds for its OPPM and PPM variants
on a single text string.

The difficulty in using a non-deterministic Morris-Pratt automaton for
a tree text is that if we simply run the automaton on the tree as is,
then the above amortization argument does not hold.
It seems that this point was overlooked even in the
exact pattern matching problem on trees (see the proof of Lemma 2.2
of the work by Dubiner et al.~\cite{DubinerGM94}).

Still, we will show that a small trick permits us to bound the
number of failure transitions per character to amortized constant,
achieving our result for the OPPM problem on trees.
We here emphasize that the same trick can be employed in
\emph{any} variant of a non-deterministic Morris-Pratt type automaton.
This implies that it is actually possible to
solve the exact pattern matching problem on trees in $O(m+N)$ time
using the Morris-Pratt automaton,
and the PPM problem on trees in $O((m+N) \log(\min\{\sigma, m\}))$ time.
Both of these results are optimal;
the former is clear, and the latter matches the lower bound of the PPM problem
in the comparison model~\cite{AmirN09}.

Several results for the exact pattern matching problem
with a single pattern string and a labeled graph are known
(e.g.~\cite{ParkK95,AmirLL00}).
See a survey~\cite{AmirLL14} for other schemes of pattern matching
on graph texts.

Amir and Navarro~\cite{AmirN09} showed 
the PPM problem on DAGs is NP-complete.
Coupled with their afore-mentioned results on trees,
we can observe that
the PPM and OPPM problems have similar complexities on trees and DAGs.


\section{Preliminaries}

Let $\Sigma$ be a totally ordered alphabet.
An element of $\Sigma^*$ is called a {\em string}.
The length of a string $w$ is denoted by $|w|$.
The empty string $\varepsilon$ is a string of length~$0$.
For a string $w = x y z$, $x$, $y$ and $z$ are called
a \emph{prefix}, \emph{substring}, and \emph{suffix} of $w$, respectively.
The length of a string $w$ is denoted by $|w|$.
The $i$-th character of a string $w$ is denoted by $w[i]$ for each $1 \leq i \leq |w|$.
For a string $w$ and two integers $1 \leq i \leq j \leq |w|$,
let $w[i..j]$ denote the substring of $w$ that begins at position $i$ and ends at
position $j$. For convenience, let $w[i..j] = \varepsilon$ when $i > j$.

Any strings $x, y \in \Sigma^*$ of equal length $m$
are said to be \emph{order-isomorphic}~\cite{KubicaKRRW13} if
the relative orders of the characters of $x$ and $y$ are the same,
i.e., $x[i] \le x[j] \iff y[i] \le y[j]$ for any $1 \leq i,j \leq m$.
A non-empty pattern string $p$ is said to \emph{order-preserving match}
(\emph{op-match} in short)
a non-empty text string $t$ iff
there is a position $i$ in $t$ such that $p \approx t[i-|p|+1..i]$.
The \emph{order-preserving pattern matching} (\emph{OPPM}) problem
is to find all such text positions.

For any string $x$ of length $m$,
an integer $i$~($1 \leq i < m$) is said to be
an \emph{order-preserving border} of $x$ if $x[1..i] \approx x[m-i+1..m]$.

We consider the following two variants of the OPPM problem:
Assume that the set of text strings is given
as a tree $T$ or a DAG $G$ where each edge is labeled by
a character from $\Sigma$.
A pattern string $p$ of length $m$ is said to op-match
a tree $T$ (resp. a DAG $G$) if $p$ op-matches the label of a path in $T$
(resp. $G$).
In this paper, we consider
the locating version of the OPPM on trees
and the decision version of the OPPM on DAGs,
which are respectively defined as follows.

\begin{problem}[The OPPM problem on trees] \label{prob:opmatch_tree}
  Given a pattern string $p$ and
  an edge-labeled tree $T$,
  report the final node of every path in $T$ that $p$ op-matches. 
\end{problem}

\begin{problem}[The OPPM problem on DAGs] \label{prob:opmatch_dag}
  Given a pattern string $p$ and
  an edge-labeled DAG $G$,
  determine whether $p$ op-matches $G$ or not.
\end{problem}


\section{Order preserving pattern matching on trees}

Our algorithm for order preserving pattern matching on
a text tree is inspired by the algorithms
for order preserving pattern matching on a text string~\cite{KubicaKRRW13,ChoNPS15}.
We will utilize the following tools in our algorithm.

For any string $x$ let $\Lmax{x}$ be an array of length $|x|$ such that
$\Lmax{x}[i] = j$ if $x[j] = \max\{x[k] \mid 1 \leq k < i, x[k] \leq x[i]\}$.
Similarly, let $\Lmin{x}$ be an array of length $|x|$ such that
$\Lmin{x}[i] = j$ if $x[j] = \min\{x[k] \mid 1 \leq k < i, x[k] \geq x[i]\}$.
If there is no such $j$, then let $\Lmax{x}[i] = 0$ and $\Lmin{x}[i] = 0$,
respectively.
If there are several such $j$'s,
then we select the rightmost one among them.

\begin{lemma}[\cite{KubicaKRRW13}] \label{lem:Lmin_Lmax_sort}
  Given a string $x$,
  we can compute the $\Lmax{x}$ and $\Lmin{x}$ arrays in $O(\sort(x))$ time,
  where $\sort(x)$ is the time to sort the elements of $x$.
\end{lemma}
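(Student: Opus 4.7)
The plan is to reduce the computation of $\Lmax{x}$ (and symmetrically $\Lmin{x}$) to the classical ``nearest smaller value to the left'' problem on a suitably ordered permutation, which is solvable in linear time by a monotonic stack.

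First, I would sort the positions $1, 2, \ldots, |x|$ in nondecreasing order of $x$-value, breaking ties by position ascending, producing a permutation $\sigma$ with $x[\sigma(1)] \leq \cdots \leq x[\sigma(|x|)]$. This takes $O(\sort(x))$ time by definition. The key observation is that $\Lmax{x}[i]$ can then be read off from $\sigma$ as follows: if $i = \sigma(k)$, then $\Lmax{x}[i] = \sigma(k')$ where $k'$ is the largest index less than $k$ with $\sigma(k') < i$ (and $\Lmax{x}[i] = 0$ when no such $k'$ exists). To justify this, note that $\sigma(1), \ldots, \sigma(k-1)$ consists of all positions $j$ with $x[j] < x[i]$ together with those $j < i$ satisfying $x[j] = x[i]$; restricting further to $\sigma(k') < i$ yields exactly the candidate set $\{j : j < i,\ x[j] \leq x[i]\}$. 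Among those candidates, taking the largest rank $k'$ selects one achieving the maximum $x$-value, and the position tie-breaking in $\sigma$ guarantees that among equal $x$-values the largest rank corresponds to the rightmost position, matching the definition of $\Lmax{x}[i]$ precisely.

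Given this reduction, $\Lmax{x}$ is computed in $O(|x|)$ additional time by a standard monotonic-stack scan of $\sigma$: maintain a stack of indices whose $\sigma$-values strictly increase from bottom to top; upon processing $k$, pop all indices whose $\sigma$-value is at least $\sigma(k)$, read $k'$ as the remaining stack top (or $0$ if the stack is empty), and then push $k$. The array $\Lmin{x}$ is handled symmetrically by sorting in nonincreasing order of $x$-value with the same position tie-breaking and reusing the same scan. Summing the two phases yields $O(\sort(x) + |x|) = O(\sort(x))$ overall.

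The one delicate point to verify is that the tie-breaking in the sort aligns exactly with the ``rightmost among positions achieving the maximum (resp.\ minimum)'' rule in the definition of $\Lmax{x}[i]$ (resp.\ $\Lmin{x}[i]$); once this alignment is checked as sketched above, the remaining correctness and complexity claims are standard.
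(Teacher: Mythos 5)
Your proof is correct. Note that the paper itself offers no proof of this lemma --- it is imported from Kubica et al.~\cite{KubicaKRRW13} --- so the comparison is with the standard argument there, whose post-sorting phase differs from yours: after sorting the positions by value one processes $i = |x|, |x|-1, \ldots, 1$ while keeping the still-active positions in a doubly-linked list ordered by value, reads $\Lmax{x}[i]$ and $\Lmin{x}[i]$ off the two list-neighbours of $i$, and then deletes $i$; correctness rests on the fact that once $i+1,\ldots,|x|$ are deleted, the neighbours of $i$ are precisely the predecessor and successor of $x[i]$ among $\{x[k] : k < i\}$. You instead pass to the rank permutation $\sigma$ and solve an all-nearest-smaller-values instance with a monotonic stack. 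Both post-processing phases run in $O(|x|)$ after the $O(\sort(x))$ sort, so the two proofs are of equal strength; yours trades a single backward pass that yields both arrays for two independent forward scans with no deletions, which is arguably simpler to implement. The genuinely delicate point in your reduction --- that the ascending-position tie-break in $\sigma$ makes ``largest rank among the candidate set'' coincide with ``rightmost position among those attaining the extremum,'' including the case where the extremum equals $x[i]$ itself --- is exactly what needs checking, and your justification of it is sound.
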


\begin{lemma}[\cite{ChoNPS15}] \label{lem:Lmin_Lmax}
  For strings $x$ and $y$, assume $x[1..i] \approx y[1..i]$ for
  $1 \leq i < \min\{|x|, |y|\}$.
  Let $a = \Lmax{x}[i+1]$ and $b = \Lmin{x}[i+1]$.
  Let $\alpha$ be the condition that $y[a] < y[i+1]$
  and $\beta$ be the condition that $y[i+1] < y[b]$.
  Then, $x[1..i+1] \approx y[1..i+1] \Longleftrightarrow
  (\alpha \land \beta) \lor (\lnot \alpha \land \lnot \beta)$.
  In case $a$ or $b$ is equal to $0$,
  we assume the respective condition $\alpha$ or $\beta$ is true.
\end{lemma}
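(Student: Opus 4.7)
The plan is to use the hypothesis $x[1..i] \approx y[1..i]$ to reduce the problem to verifying only those order comparisons that involve the new position $i+1$. Since every comparison between positions in $\{1,\ldots,i\}$ already agrees in $x$ and $y$, the condition $x[1..i+1] \approx y[1..i+1]$ is equivalent to the following: for every $k \in \{1,\ldots,i\}$, $x[k] \le x[i+1] \iff y[k] \le y[i+1]$ and $x[k] \ge x[i+1] \iff y[k] \ge y[i+1]$. The whole task therefore reduces to expressing this family of $2i$ equivalences in terms of just two tests, namely $\alpha$ and $\beta$.

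Next I would exploit the definitions of $a = \Lmax{x}[i+1]$ and $b = \Lmin{x}[i+1]$: among positions $k \le i$, $x[a]$ is the largest value not exceeding $x[i+1]$ and $x[b]$ is the smallest value not below $x[i+1]$. The rightmost-occurrence tie-breaking rule implies that whenever some $x[k]$ equals $x[i+1]$ we have $a = b$, leaving exactly two non-degenerate cases: Case A, in which $x[a] = x[i+1] = x[b]$, and Case B, in which $x[a] < x[i+1] < x[b]$ strictly. In Case B, every $k \le i$ satisfies either $x[k] \le x[a]$ or $x[k] \ge x[b]$, so by the isomorphism on $x[1..i]$ the same partition transfers to $y$ with $y[k] \le y[a]$ or $y[k] \ge y[b]$. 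Assuming $\alpha \land \beta$, i.e.\ $y[a] < y[i+1] < y[b]$, one side yields $y[k] < y[i+1]$ and the other $y[k] > y[i+1]$, giving the required equivalences; conversely, specializing the isomorphism requirement to $k = a$ and $k = b$ forces $\alpha$ and $\beta$. Case A is symmetric: the only way the comparisons can be preserved is $y[a] = y[i+1] = y[b]$, which is precisely $\neg\alpha \land \neg\beta$, and the converse is immediate by taking $k = a$ (or $k = b$).

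Finally I would dispatch the degenerate cases $a = 0$ or $b = 0$, which cannot both occur simultaneously when $i \ge 1$ since at least one element of $x[1..i]$ lies on one side of $x[i+1]$. When $a = 0$ all $x[k]$ with $k \le i$ exceed $x[i+1]$, and the condition collapses to $\beta$ alone (with $\alpha$ true by convention); symmetrically when $b = 0$. I expect the main obstacle to be not the logic of either case, which is short, but the bookkeeping that justifies reducing to exactly two cases: one must verify carefully that the tie-breaking convention makes $a = b$ whenever $x[a] = x[i+1]$, so that Case A and Case B are exhaustive and mutually exclusive, and that the degenerate cases do not introduce spurious matches.
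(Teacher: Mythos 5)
Your argument is correct, and since the paper only imports this lemma from Cho et al.\ without proving it, there is no in-paper proof to compare against; your reduction to the comparisons involving position $i+1$, the split into the tie case ($x[a]=x[i+1]=x[b]$, forced by the rightmost tie-break to give $a=b$), the strict case ($x[a]<x[i+1]<x[b]$), and the degenerate cases $a=0$ or $b=0$ is exactly the standard route. One small point you should make explicit: the right-hand side of the equivalence is a disjunction, so in Case B you must also rule out the branch $\lnot\alpha\land\lnot\beta$ (it would give $y[b]\le y[i+1]\le y[a]$, contradicting $y[a]<y[b]$, which follows from $x[a]<x[b]$ and $x[1..i]\approx y[1..i]$), and symmetrically in Case A the branch $\alpha\land\beta$ is impossible since $a=b$; otherwise the backward implication is not fully established. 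With that one-line addition the proof is complete.
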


Let $p$ be a pattern string of length $m$.
We compute the \emph{order-preserving border array} $B_p$ of length $m$ such that $B_p[1] = 0$ and $B_p[i] = \max\{j \mid j < i, p[1..j] \approx p[i-j+1..i]\}$ for $2 \leq i \leq m$. Namely, $B_p[i]$ stores the largest order-preserving border of the prefix $p[1..i]$.
Suppose that $\Lmin{p}$ and $\Lmax{p}$ have already been computed
using Lemma~\ref{lem:Lmin_Lmax_sort}.
Kubica et al.~\cite{KubicaKRRW13} showed that 
using a variant of the Morris-Pratt (MP) algorithm~\cite{MorrisPratt}
based on Lemma~\ref{lem:Lmin_Lmax},
the $B_p$ array can be computed in $O(m)$ time.
Then, given a text string $t$ of length $n$,
all positions $i$ in $t$ where $p \approx t[i-m+1..i]$
can be computed in $O(n)$ time.

We will extend the above algorithm to the case where
the text strings are given as a tree $T$ of size $N$.
It is convenient to consider an MP-style automaton 
$\mathcal{A}_p$ based on the op border array $B_p$ such that
the set of states is $\{s_0, \ldots, s_m\}$;
the initial state is $s_0$;
the only accepting state is $s_m$;
for each $1 \leq i \leq m$ there is a goto transition from $s_{i-1}$ to $s_{i}$ with character $c = p[i]$;
and there is a failure transition from $s_{i}$ to $s_{j}$
iff $B_p[i] = j$.
See Fig.~\ref{fig:op-automaton} for a concrete example of $\mathcal{A}_p$.
We run $\mathcal{A}_p$ over the text tree $T$ in depth first manner.
Let $v$ be any node in $T$.
For any $1 \leq i \leq m$,
let $v_i$ denote the $i$th ancestor of $v$ (if it exists),
and $\Path(v_i,v)$ the path label from $v_i$ to $v$.
At each node $v$ visited during the DFS,
we compute the length $\ell(v)$ of the longest path $v_{\ell(v)}, \ldots, v$
such that $p[1..\ell(v)] \approx \Path(v_{\ell(v)}, v)$.
We report every node $v$ with $\ell(v) = m$.
If $\ell(v) < m$,
then we store a pointer to state $s_{\ell(v)}$ at node $v$,
and otherwise we store a pointer to state $s'$ at node $v$,
where $s'$ is the state pointed by the failure transition of $s_{\ell(v)}$.
\begin{figure}[t] 
 \centerline{
 \includegraphics[scale=0.7]{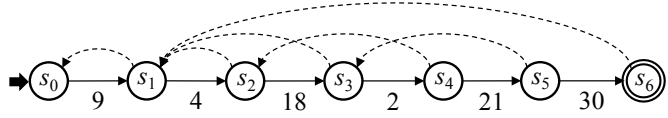}
 }
 \caption{The MP-style automaton $\mathcal{A}_p$ for pattern string $p = (9, 4, 18, 2, 21, 30)$.
   The solid arcs denote the goto transitions, and the broken arcs do the failure transitions. }
\label{fig:op-automaton}
\end{figure}

Suppose we have just visited node $v$.
Initially, let $\ell \leftarrow \ell(v)$.
Let $u$ be any child of $v$ and let $c$ be the edge label from $v$ to $u$.
We proceed to node $u$ and find $\ell(u)$.
We test if the characters $\Path(v_\ell, u)[a]$ and $\Path(v_\ell, u)[b]$
satisfy one of the conditions in Lemma~\ref{lem:Lmin_Lmax},
where $a = \Lmax{p}[\ell+1]$ and $b = \Lmin{p}[\ell+1]$.
If they do, then we let $\ell(u) = \ell+1$ and proceed with the DFS.
Otherwise, then let $\ell \leftarrow B_p[\ell]$,
and repeat the above procedure
until we find the largest $\ell$ with which one of the conditions
in Lemma~\ref{lem:Lmin_Lmax} is satisfied.
For each candidate $\ell$ above,
accessing the character $\Path(v_{\ell}, u)[a]$ from the currently visited node
$u$ means accessing the $(\ell-a+1)$th ancestor of $u$.
Let $L$ be the length of the longest path in $T$.
During the DFS, we store the edge labels of the current path from the root
into an array of length $L$.
Using this array we can access $\Path(v_{\ell}, u)[a]$ (and $\Path(v_{\ell}, u)[b]$)
in $O(1)$ time.
It is easy to update this array during the DFS, in total $O(N)$ time.
When we come back to node $v$ after a back track,
then we resume pattern matching from state $s_{\ell(v)}$ of $\mathcal{A}_p$
using a pointer stored at $v$,
and proceed to the next child of $v$.
This pointer is used after a back track.

One delicacy remains.
For a single text string
the number of candidate $\ell$'s,
which is the same as the number of failure transitions used per
text character, can be amortized constant.
This amortization argument is based on the fact that
the total number of times the failure transitions are used
for the whole text
cannot exceed the total number of times the goto transitions are used
in the automaton $\mathcal{A}_p$, which is bounded by the length
of the single text string.
However, in our tree case, this amortization argument does not hold
if we carelessly continue the DFS at branching nodes
that are close to leaves,
leading to $O(mN)$ worst case time.
See Appendix for an example.
To avoid this, at each node $u$ of the tree $T$
we store the distance $D_u$ between $u$ and a furthest leaf
in the subtree rooted at $u$.
Namely, $D_u$ is the length of the longest path from $u$ and a leaf below $u$.
Suppose that we are currently visiting a node $u$ during the DFS
with $D_u \geq m-\ell$,
and that the respective state of the automaton $\mathcal{A}_p$ is $s_\ell$
(Notice that if $D_u < m-\ell$, then clearly the pattern $p$ does not op-match
any path ending in the subtree under $u$,
and thus we need not search the subtree under $u$ in this case).
Let $v$ be any child of $u$.
If at least one of the conditions of Lemma~\ref{lem:Lmin_Lmax} is satisfied,
then we let $\ell \leftarrow \ell + 1$ and the DFS proceeds to $v$.
Otherwise, we let $\ell \leftarrow B_p[\ell]$
and check if $D_u \geq m-\ell$ holds each time the value of $\ell$ gets updated.
We stop updating $\ell$ as soon as
we encounter $\ell$ for which $D_u < m-\ell$,
and the DFS immediately starts a back track from this child $v$.
This permits us to charge the cost for amortization to
the length $D_u$ of this longest path under $u$.
Thus, this method correctly finds all locations in the tree $T$ where
$p$ op-matches.
We can easily precompute $D_u$ for all nodes $u$ in $T$
in $O(N)$ total time by a standard traversal on $T$.

Overall, we obtain the main result of this section.
\begin{theorem}
  Given a pattern $p$ of length $m$ and a text tree $T$ of size $N$,
  the OPPM problem on trees (Problem~\ref{prob:opmatch_tree})
  can be solved in $O(\sort(p) + N)$ time.
\end{theorem}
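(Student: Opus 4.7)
My plan is to decompose the theorem into a preprocessing bound, a correctness argument, and a running-time analysis of the DFS itself, with the last being the substantive part. For the preprocessing, Lemma~\ref{lem:Lmin_Lmax_sort} supplies $\Lmin{p}$ and $\Lmax{p}$ in $O(\sort(p))$ time, after which the op-border array $B_p$ and the automaton $\mathcal{A}_p$ are built in $O(m)$ time by the standard MP-style construction based on Lemma~\ref{lem:Lmin_Lmax}. On the text side, one post-order traversal computes every $D_u$ in $O(N)$ time, and the ancestor array that records the current DFS path is maintained in amortized $O(1)$ per tree edge, so each access to $\Path(v_\ell,u)[a]$ and $\Path(v_\ell,u)[b]$ inside Lemma~\ref{lem:Lmin_Lmax} costs $O(1)$.

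For correctness I would induct on the DFS order of visits. When the DFS enters a child $u$ of $v$, the sequence $\ell(v), B_p[\ell(v)], B_p[B_p[\ell(v)]], \dots$ enumerates exactly the op-border lengths of $p[1..\ell(v)]$ in decreasing order. By the usual MP argument, the true $\ell(u)$ equals $\ell+1$ for the largest such $\ell$ at which Lemma~\ref{lem:Lmin_Lmax} (applied with $a = \Lmax{p}[\ell+1]$ and $b = \Lmin{p}[\ell+1]$) returns true. The $D_u$ short-circuit is safe because whenever $D_u < m - \ell$, every downward path through $u$ is too short to complete a length-$m$ op-match extending the current matched prefix, so no occurrence of $p$ is missed. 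Matches are therefore reported exactly at the nodes with $\ell(u) = m$.

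The principal difficulty is the time analysis. The standard MP amortization via the potential $\Phi = \ell$ breaks on trees because $\ell$ is reset to $\ell(v)$ at every child of a branching node $v$, and so a naive accounting would allow $\Omega(m)$ failure transitions per child and total $\Omega(mN)$ time. To rule this out I would argue separately about two types of work at a visited node $u$. Let $k_u$ be the number of failure transitions performed while processing the edge into $u$. If this processing ends with a successful goto, then $\ell(u) \le \ell(v) - k_u + 1$, and telescoping along the DFS call stack absorbs these failures into the $O(N)$ goto transitions in the standard way. If the processing aborts at $u$ because $D_u < m - \ell$, then the DFS immediately back-tracks from $u$ without touching $u$'s subtree; each of the $k_u$ failures in this aborted case can be charged to a distinct edge on a length-$D_u$ downward witness path from $u$, and by choosing the witness path canonically (e.g.\ always the lexicographically first longest downward path) one ensures these charges do not overlap across different short-circuit events. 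The total charge is thus $O(N)$, giving amortized $O(1)$ failures per tree edge.

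Combining the $O(\sort(p))$ preprocessing with the $O(N)$ DFS work, where by the analysis above each tree edge incurs amortized $O(1)$ goto and failure transitions and $O(1)$ path-array maintenance, yields the claimed $O(\sort(p) + N)$ total time and completes the proof of the theorem.
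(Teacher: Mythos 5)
Your algorithm and overall decomposition coincide with the paper's: the same $O(\sort(p))$ preprocessing via Lemma~\ref{lem:Lmin_Lmax_sort}, the same non-deterministic MP-style automaton run by DFS with a path array for $O(1)$ character access, and the same $D_u$ short-circuit as the device that restores linearity. The correctness part (MP border-chain invariant plus the observation that $D_u < m-\ell$ makes every completion below $u$ impossible) also matches the paper's argument. So the route is essentially the paper's; the question is whether your amortization closes.

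There is a genuine gap, and it sits in your case (a), not your case (b). For an edge whose processing ends in a successful goto, you write that $\ell(u) \le \ell(v) - k_u + 1$ and that ``telescoping along the DFS call stack absorbs these failures into the $O(N)$ goto transitions in the standard way.'' But the standard telescoping is precisely what breaks on trees: at a branching node $v$ the state is reset to $\ell(v)$ for every child, so $\sum (\ell(v)-\ell(u)+1)$ over tree edges does not telescope to $O(N)$ --- this is the very phenomenon you identify two sentences earlier, and it is the phenomenon exhibited by the paper's appendix instance, where every expensive cascade \emph{does} end in a successful goto (into state $s_1$). To repair case (a) you must use the $D$-cutoff there as well: a successful cascade of length $k_u$ into child $u$ is only possible if every intermediate $\ell$ passed the depth test, which forces $D_u \gtrsim k_u$, i.e.\ a downward path of length about $k_u$ inside $T_u$; it is the disjointness of the children's subtrees (not call-stack telescoping) that then pays for siblings of a branching node, while ordinary telescoping pays only along a single descending path. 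Conversely, the overlap worry you raise in case (b) is the easy one: when the cascade aborts, the subtree $T_u$ is never entered, the aborted subtrees are therefore pairwise disjoint, and $k_u \le D_u + O(1) \le |T_u| + O(1)$ charges them without any canonical-path device. In short, the two cases need to trade arguments, and the successful-goto case still requires a careful combination of telescoping along entered paths with disjoint-subtree charging at branch points. To be fair, the paper itself disposes of this point in one sentence (``charge the cost to the length $D_u$ of the longest path under $u$''), so your write-up is at a comparable level of explicitness --- but as stated, your case (a) is an appeal to exactly the argument whose failure motivates the whole section, and it would not survive the appendix counterexample.
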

For general ordered alphabets, $\sort(p) = O(m \log m)$.
For integer alphabets of size $m^{O(1)}$, $\sort(p) = O(m)$.
For integer alphabets of size $N^{O(1)}$, $\sort(p) = O(m+N)$.


\section{Order preserving pattern matching on DAGs}

A string $x$ is said to be a \emph{subsequence} of
another string $t$ if there exists an increasing
sequence of positions $1 \leq i_1 < \cdots < i_{|x|} \leq |t|$
of $t$ such that $x = t[i_1] \cdots t[i_{|x|}]$.
Intuitively, $x$ is a subsequence of $t$
if $x$ can be obtained by removing zero or more characters from $t$.

The \emph{order-preserving subsequence matching problem}
(\emph{OPSM} in short) is,
given a pattern string $p$ and a text string $t$,
to determine whether there is a subsequence $s$ of $t$
such that $p \approx s$.
This problem is known to be NP-complete~\cite{BoseBL98}.

\begin{theorem}
  The OPPM problem on DAGs (Problem~\ref{prob:opmatch_dag}) is NP-complete.
\end{theorem}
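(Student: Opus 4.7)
The plan is to prove NP-completeness in two parts: membership in NP and NP-hardness via a reduction from the OPSM problem, which has just been cited as NP-complete. For membership, given a witness path of length $m$ in $G$, we can read its edge labels into a string $s$ in polynomial time and test $p \approx s$ by comparing relative orders, which takes polynomial time.

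For hardness, I would reduce OPSM to the OPPM problem on DAGs. Given an OPSM instance consisting of a pattern $p$ of length $m$ and a text string $t$ of length $n$, I construct a DAG $G$ as follows. Create nodes $v_0, v_1, \ldots, v_n$, and for every pair of indices $0 \le i < j \le n$, add a directed edge from $v_i$ to $v_j$ labeled with the character $t[j]$. This construction is clearly polynomial in $n$, uses only edge labels drawn from the alphabet of $t$, and yields an acyclic graph since all edges go from a lower-indexed node to a higher-indexed node.

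The correctness argument is the key step. I would show that the set of path labels in $G$ coincides with the set of non-empty subsequences of $t$. Indeed, any directed path in $G$ visits nodes $v_{i_0}, v_{i_1}, \ldots, v_{i_k}$ with $i_0 < i_1 < \cdots < i_k$, and its edge label sequence is $t[i_1] t[i_2] \cdots t[i_k]$, which is exactly the subsequence of $t$ selected by the positions $i_1 < \cdots < i_k$; conversely, every non-empty subsequence arises this way. Because a substring of a path label is itself the label of a sub-path, ``$p$ op-matches $G$'' is equivalent to ``there exists a subsequence $s$ of $t$ with $p \approx s$,'' which is precisely the OPSM condition. Thus the reduction is correct, and combined with NP-membership this yields NP-completeness.

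I do not expect a serious obstacle: the OPSM problem already captures the combinatorial core of ``choose which characters to read,'' and a DAG whose paths enumerate subsequences is the natural gadget. The only thing to double-check is that the reduction respects the alphabet convention of the OPPM problem (the edge labels are just characters of $t$, so this is fine) and that ``op-matches a path'' is read as ``is order-isomorphic to some sub-path label,'' which, via the closure of path labels under taking substrings, still reduces to matching a full path label in the constructed $G$.
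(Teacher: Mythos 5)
Your proof is correct and follows essentially the same route as the paper: a reduction from OPSM via a polynomial-size DAG whose path labels are exactly the non-empty subsequences of $t$ (the paper uses the directed acyclic subsequence graph, which is your construction restricted to first occurrences, but the two gadgets have identical path-label sets). Your explicit remark that an op-match of a path label reduces to an exact order-isomorphism with some full path label, via closure of path labels under taking sub-paths, is a valid and slightly more careful handling of a point the paper leaves implicit.
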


\begin{proof}
  It is clear that the OPPM problem on DAGs is in NP.
  The proof for NP-completeness is via the above OPSM problem.
  Suppose $p$ is a given pattern string
  and $t$ is a given text string for the OPSM problem.
  Consider the \emph{directed acyclic subsequence graph}
  (\emph{DASG} in short)~\cite{Baeza-Yates91} $G_t = (V, E)$
  such that
  \begin{eqnarray*}
    V & = & \{v_0, \ldots, v_{|t|}\}, \\
    E & = & \{(v_i, c, v_j) \mid c = t[j] \mbox{ and } t[k]\neq c \mbox{ for } i < \forall k < j \}.
  \end{eqnarray*}
  The DASG $G_t$ represents all subsequences of $t$,
  i.e., $s$ is a subsequence of $t$ if and only if
  there is a path in $G_t$ of which label coincides with $s$
  (see Fig.~\ref{fig:dasg} for an example).
  Hence, if we can solve the op-matching problem
  for the given pattern string $p$ and the DASG $G_t$,
  then we can immediately solve the OPSM problem.
  The size of DASG $G_t$ is clearly polynomial
  in the length of the given text $t$
  and $G_t$ can be easily constructed in polynomial time.
  This completes the proof.
\end{proof}

\begin{figure}[tb] 
 \centerline{
 \includegraphics[scale=0.6]{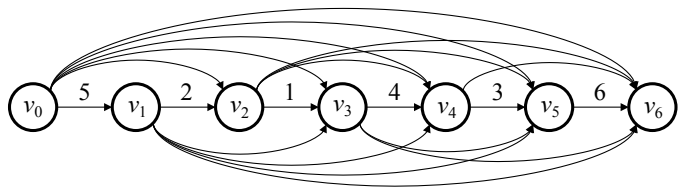}
 }
 \caption{The DASG $G_t$ of string $t = (5, 2, 1, 4, 3, 6)$. At each node, every in-coming edge is labeled with the same character.}
\label{fig:dasg}
\end{figure}


\bibliographystyle{abbrv}
\bibliography{ref}

\clearpage

\appendix

\section{Appendix}

In this appendix,
we show that applying a Morris-Pratt type algorithm
for the order-preserving pattern matching (OPPM)
problem to a text tree as is can lead to $O(mN)$ worst-case time
for some instances.

Consider a complete binary tree $T$ of height $h$ such that
\begin{itemize}
\item
  the path from the root to each node of depth $h-2$
  is labeled by non-negative integers in increasing order,
\item
  the two out-going edges of each node of depth $h-2$
  are labeled by $0$ and $1$, and
\item
  the edge leading to any leaf is labeled by any character.
\end{itemize}
Also, consider pattern string $p = (2, 3, \ldots, m+1)$ of length $m \leq h-2$.
See Fig.~\ref{fig:bad-example} for a concrete example of tree $T$ and pattern $p$.

\begin{figure}[tbhp] 
 \centerline{
   \includegraphics[scale=0.45]{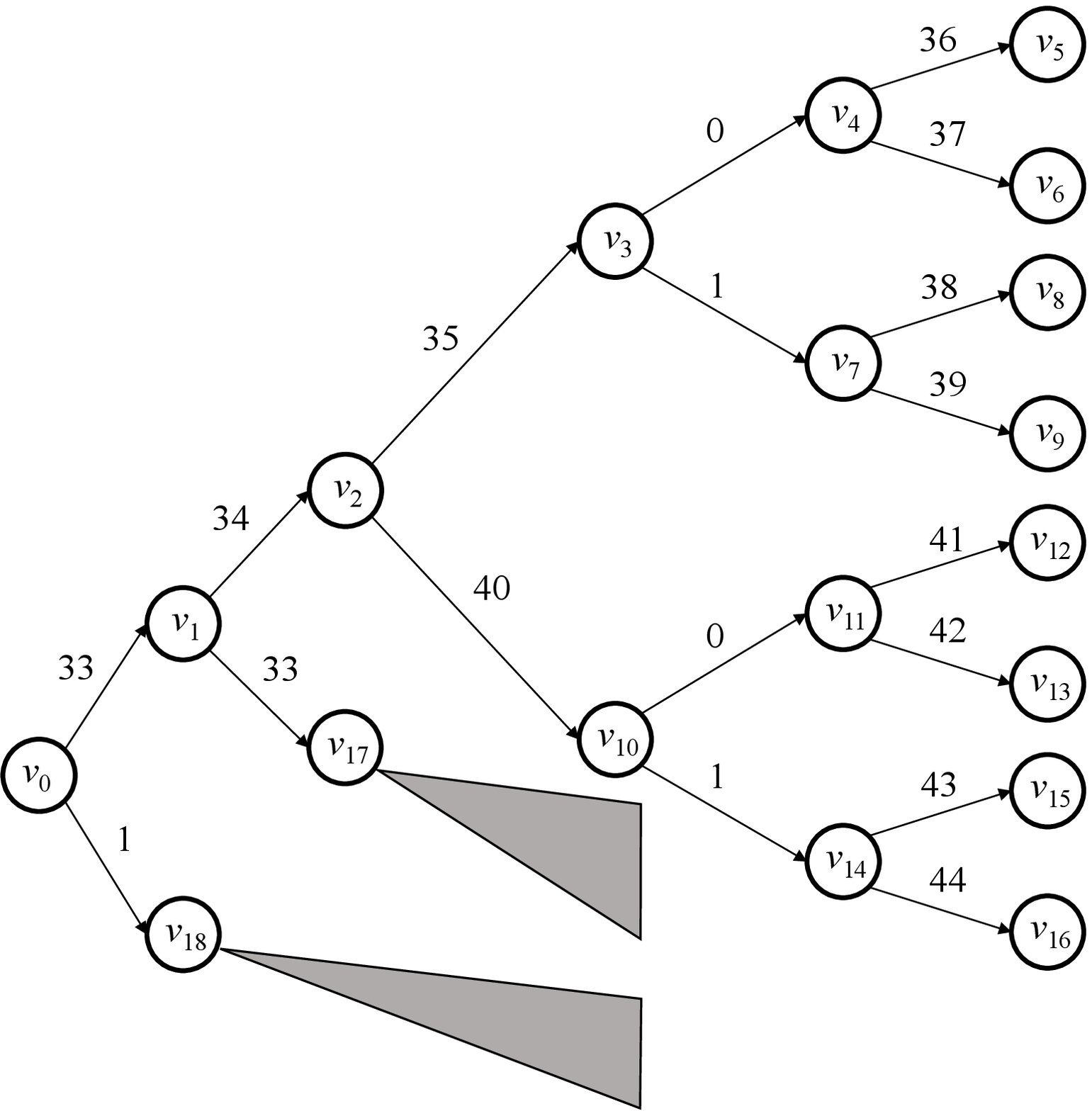}
   \hfill
   \includegraphics[scale=0.6]{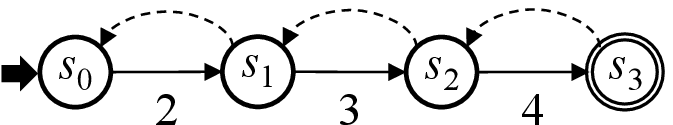}
 }
 \caption{
   To the left is a tree complete binary tree $T$ of height $h = 6$,
   where the gray subtrees are omitted.
   To the right is the MP-style automaton $\mathcal{A}_p$ for pattern string $p = (2, 3, 4)$ with $m = 3$.
   Suppose we have arrived at node $v_3$ during the DFS.
   Since $p = (2, 3, 4) \approx (33, 34, 35)$,
   the current state of $\mathcal{A}_p$ is $s_3$.
   We proceed to node $v_4$ with the DFS.  
   Since there was a match at the previous node $v_3$,
   we follow the failure transition of $s_3$,
   the current state of $\mathcal{A}_p$ becomes $s_2$,
   and a pointer to $s_2$ is stored at node $v_3$.
   Now we perform order preserving pattern matching.
   Since $(2, 3, 4) \not \approx (34, 35, 0)$,
   we follow the failure transition of $s_2$ and
   the current state of $\mathcal{A}_p$ becomes $s_1$.
   Since $(2, 3) \not \approx (35, 0)$,
   we again follow the failure transition of $s_1$ and
   the current state of $\mathcal{A}_p$ becomes $s_0$.
   Finally, since $(2) \approx (1)$,
   the current state of $\mathcal{A}_p$ becomes $s_1$
   and a pointer to $s_1$ is stored at node $v_4$.
   When we come back to node $v_3$ after a back track,
   the same amount of work as above is needed for the other child $v_7$.
   This applies to any node of the tree $T$
   with out-going edges labeled by $0$ and $1$.
 }
\label{fig:bad-example}
\end{figure}

The failure transition of each state $s_{i}$ of the automaton $\mathcal{A}_p$
points to state $s_{i-1}$.
Suppose we have arrived at a node $v$ of depth $h-2$ during the DFS.
By the definitions of tree labels and $p$,
pattern $p$ op-matches the path of length $m$ ending at this node $v$,
and hence the current state of the automaton is $s_{m}$.
Now we proceed to a child of $v$, say $u$,
and suppose that the edge label from $v$ to $u$ is $0$
(the case with the edge label being $1$ is analogous).
Since the edge label is $0$,
the longest prefix of $p$ that op-matches a path ending at $u$ is $p[1] = 1$
and hence $m-1$ failure transitions are used at this point of the DFS.

The tree $T$ contains $K = 2^h$ leaves and hence
contains $N = 2K-1$ total nodes.
The number of tree edges labeled by $0$ is $K/2 = (N+1)/4$.
Thus, the total number of times the failure transitions are used
is $\Theta(mN)$ for this tree $T$ and pattern $p$.

We remark that the same instance leads to an $O(mN)$ worst-case time
bound for the parameterized pattern matching (PPM) problem on trees as well.

For the exact pattern matching problem,
if we allow out-going edges of a node to have the same labels,
then it is easy to show $O(mN)$ worst-case time bound:
In the above example, 
replace $0$ with $b$,
$1$ with $c$, and 
all the other tree edge labels with $a$.
For pattern $p = a^m$, 
the total number of failure transitions of the Morris-Pratt automaton
used for this new tree and $p$ is $\Theta(mN)$.

\end{document}